\documentclass[
	a4paper,
	USenglish,
	autoref,
	cleveref,
]{lipics-v2021}

\usepackage[
	vlined,
	linesnumbered,
	noresetcount,
]{algorithm2e}
\DontPrintSemicolon
\SetArgSty{textnormal}
\SetCommentSty{textnormal}
\SetKwBlock{SubAlgoBlock}{}{end}
\newcommand{\SubAlgo}[2]{#1 \SubAlgoBlock{#2}}
\SetAlgorithmName{Algorithm}{Algorithm}{List of algorithms}

\usepackage{cite}

\renewcommand{\_}{\texttt{\textunderscore}}

\newcommand{\pretext}{\textbf{pre:}}
\newcommand{\pre}[1]{\pretext{} #1 \\}

\newcommand{\returnalso}{\ensuremath{\wedge \,}\\\phantom{\textbf{return} }}

\SetKwComment{Comment}{// }{}
\SetKw{Trigger}{trigger}
\SetKw{WhenReceived}{when received}
\SetKw{WhenAccepted}{when accepted}
\SetKw{WhenDelivered}{when delivered}
\SetKw{KwTo}{to}
\SetKw{KwAll}{all}
\SetKw{KwContinue}{continue}
\SetKw{Send}{send}
\SetKw{Broadcast}{broadcast}
\SetKw{KwFrom}{from}
\SetKw{Upon}{upon}
\SetKw{Fun}{function}
\SetKw{WhenTimerFires}{when the timer expires}
\SetKw{WhenTimer}{when timer}
\SetKw{Expires}{expires}
\SetKw{True}{true}
\SetKw{False}{false}
\SetKw{Periodically}{periodically}

\newcommand{\func}[1]{\ensuremath{\mathtt{#1}}}
\newcommand{\var}[1]{\ensuremath{\mathit{#1}}}
\newcommand{\gvar}[1]{\ensuremath{\mathsf{#1}}}

\newcommand{\mess}[1]{\ensuremath{\mathtt{\MakeUppercase{#1}}}}
\newcommand{\lft}{\ensuremath{\leftarrow}\ }

\usepackage{float}
\usepackage{placeins}

\usepackage{multicol}

\usepackage[dvipsnames]{xcolor}

\usepackage[
	linecolor=red,
	backgroundcolor=red!25,
]{todonotes}

\theoremstyle{plain}

\newcommand{\GST}{\mathsf{GST}}
\newcommand{\valid}{\mathsf{valid}}
\newcommand{\leader}{\mathsf{leader}}

\let\oldnl\nl%
\newcommand{\nonl}{\renewcommand{\nl}{\let\nl\oldnl}}

\nolinenumbers
\hideLIPIcs

\title{Fast TetraBFT: Optimizing Latency Where It Matters}
\author{Antonio J. Fernández-Pinto}{IMDEA Software Institute, Madrid, Spain \and
  Universidad Politécnica de Madrid, Spain}{}{}{}
\author{Manuel Bravo}{Informal Systems, Madrid, Spain}{}{}{}
\author{Gregory Chockler}{University of Surrey, Guildford, UK}{}{}{}
\author{Alexey Gotsman}{IMDEA Software Institute, Madrid, Spain}{}{}{}

\titlerunning{Fast TetraBFT: Optimizing Latency Where It Matters}
\authorrunning{A.\,J. Fernández-Pinto, M. Bravo, G. Chockler, and A. Gotsman}

\ccsdesc[500]{Theory of computation~Distributed computing models}
\keywords{} %
\funding{} %
\acknowledgements{} %

\begin{document}

\maketitle

\begin{abstract}
  Unauthenticated Byzantine consensus protocols achieve optimal failure
  resilience while relying only on authenticated point-to-point channels, not
  authenticated messages. They are an attractive building block for blockchains
  that do not mandate symmetric trust assumptions as well as for future
  post-quantum settings. We consider unauthenticated Byzantine consensus in
  partially synchronous networks and focus on optimizing its \emph{good-case
    latency} -- the worst-case time for correct processes to reach a decision
  under favorable conditions. A recently proposed ForgetIT protocol achieves an
  optimal good-case latency of $3$ message delays but employs a highly complex
  design. We show that this complexity is unnecessary. To this end, we present
  Fast TetraBFT -- an unauthenticated Byzantine consensus protocol that achieves
  optimal good-case latency by augmenting an existing TetraBFT protocol with a
  simple fast-path wrapper. Our solution lowers the good-case latency of
  TetraBFT from $5$ to $3$ message delays while preserving its bounded space
  requirements and low communication complexity.
\end{abstract}

\section{Introduction}

We consider Byzantine consensus in the \emph{partial synchrony} model~\cite{dls}, 
which approximates the behavior of real networks. In this setting, the system
is guaranteed to eventually reach a \emph{Global Stabilization Time} ($\GST$)
after which message delays are bounded by an unknown constant $\delta$ and
process clocks track real time. 
Motivated by the blockchain use case, in recent years researchers have developed
a plethora of Byzantine consensus protocols in this model. Most of these are
\emph{authenticated}, i.e., they send messages authenticated using cryptographic
signatures. These allow a process $p_1$ to prove to a process $p_2$ that a third
process $p_3$ has sent a given message.
In contrast, \emph{unauthenticated} (aka \emph{information-theoretic}) protocols
rely only on authenticated point-to-point channels and thus disallow the above
pattern.
The minimal assumptions of such protocols make them applicable in a broader
range of settings.  These inlcude post-quantum~\cite{aaronson2026quantum} or computationally constrained
environments as well as blockchains that do not rely on global trust
assumptions, such as Stellar~\cite{stellar,stellar-protocol} and
Ripple~\cite{ripple}.

Since realistic networks behave synchronously most of the time, the efficiency of
partially synchronous blockchain implementations is largely determined by their performance under stable
network conditions.
In more detail, blockchains often run a sequence of Byzantine consensus
instances, each deciding on the next block in the
chain~\cite{tendermint,stellar,tenderbake}. A consensus instance proceeds
through a sequence of \emph{views} $0, 1, 2, \ldots$, each with a designated
leader process responsible for driving the protocol towards a decision.
Since in practice the leaders are selected to maximize the chances of correct
behavior (e.g., by penalizing misbehavior via stake slashing), under stable
network conditions decisions are usually reached in view $0$.
For example, data from the Cosmos Hub blockchain~\cite{cosmos} running the
Tendermint protocol~\cite{tendermint} shows that 99.995\% of the instances
were decided in view $0$ over a seven-day sample period. Thus, a
performance-critical metric for Byzantine consensus protocols is their \emph{good-case latency} -- 
the worst-case latency to decision if the consensus
instance is started during the synchronous period (after $\GST$) and the
initial leader is correct. 

In this paper we consider the problem of optimizing the good-case latency of
unauthenticated Byzantine consensus.  Assuming the optimal resilience --
$n \ge 3f+1$ processes to tolerate $f$ Byzantine failures -- the lower bound on
the good-case latency is $3\delta$~\cite{unauthenticated-good-case}.
The first protocol to match this bound, a signature-free version of
PBFT~\cite{castro-thesis}, was able to achieve this only at the expense of a
high message complexity -- $O(n^3)$ per view.  Information-theoretic
HotStuff~\cite{it-hotstuff} and TetraBFT~\cite{tetrabft} then lowered the
communication complexity to $O(n^2)$ per view, but only at the expense of a
higher good-case latency -- $6\delta$ and $5\delta$, respectively. A very recent
proposal, Forget-IT~\cite{forgetit}, achieved both the optimal good-case latency
of $3\delta$ and the communication complexity of $O(n^2)$ per
view. Unfortunately, the protocol has a complicated design, in part because it
also minimizes another latency metric, which its authors call \emph{robust
  good-case latency}. This is the worst-case latency of a decision measured from
the first time after $\GST$ when a correct leader enters a view (whether $0$ or
higher).

The rationale for the above metric was to capture latency guarantees under 
adversarial conditions, when the protocol is started before $\GST$.
We argue that the metric does not fulfil this function.
In practice, what matters under adversarial conditions is how quickly the protocol recovers after the onset of the synchronous period, i.e., the time to decision measured from $\GST$.
To take a decision in this case, processes need to first exit the views they were in before $\GST$ and enter a ``good'' view with a correct leader.
Processes exit a view when their timers expire, and the duration of these timers is typically orders of magnitude larger than $\delta$.
Thus, optimizing the latency to the decision after entering a good view is irrelevant, since the time to decision after $\GST$ will be dominated by timer durations.

Given the above considerations, we set out to find a simpler protocol that achieves the optimal good-case latency and low communication complexity.
Our answer is \emph{Fast TetraBFT} -- a protocol with the good-case latency of $3\delta$ and the communication complexity of $O(n^2)$ per view.
It furthermore guarantees optimal resilience and optimistic responsiveness~\cite{optimistic}, and can be implemented in bounded space.
It is a wrapper around TetraBFT~\cite{tetrabft} that introduces a special code path for view $0$ and executes a slightly modified TetraBFT in higher views.
Thus, the protocol improves the good-case latency of TetraBFT while preserving its performance under adversarial conditions.
Our wrapper makes only minimal assumptions about the structure of TetraBFT.
For this reason, we hope that its underlying ideas will also be applicable to lower the good-case latency of future proposals of unauthenticated BFT consensus protocols.

\section{System Model}

We consider a distributed message-passing system %
of $n$ processes, of which at most $f$ can experience Byzantine failures.
As is standard, we assume $n \ge 3f+1$ and call a set of $n-f$ processes a
\emph{quorum}. The system is partially
synchronous~\cite{dls,ChandraToueg1996}: for every execution, there exist a
\emph{Global Stabilization Time} ($\GST$) and a duration $\delta$ such that,
after $\GST$, message delays between every pair of correct processes are bounded
by $\delta$; before $\GST$, messages can be arbitrarily delayed or lost. We also
assume a known upper bound $\Delta$ on the maximum value of $\delta$ across all
executions, i.e., $\delta <
\Delta$~\cite{HerzbergKutten1989,PassShi2017}. Processes are equipped with
hardware clocks that can drift arbitrarily from real time before $\GST$, but do
not drift thereafter. Communication channels are
\emph{authenticated}~\cite{BrachaToueg85}: if a correct process $p_j$ receives a
message $m$ claiming to have been sent by a process $p_i$, and $p_i$ is correct,
then $m$ was previously sent by $p_i$ to $p_j$.

We assume an application-specific $\valid()$ predicate that indicates whether a
value proposed to consensus is valid~\cite{cachin-crypto01}. In Byzantine
consensus, each correct process proposes a valid value and then has to decide on
a value so that: (\textsf{Validity}) a correct process decides on a valid value;
(\textsf{Agreement}) no two correct processes decide on different values; and
(\textsf{Liveness}) every correct process eventually decides on a value.

\section{TetraBFT Overview}
\label{sec:overview}

We build on the TetraBFT protocol~\cite{tetrabft}, whose pseudocode is given in Figures~\ref{alg:tetra}--\ref{alg:tetra_aux} (the code highlighted in red is necessary for Fast TetraBFT and should be ignored for now).
The protocol proceeds in views, and each view $v$ has a designated leader $\leader(v)$ whose role rotates round robin among all processes.
We number TetraBFT views starting from $1$, since we use view $0$ for Fast TetraBFT.
Processes enter a view $v$ upon an upcall $\func{new\_view}(v)$ from a \emph{view synchronizer} (line~\ref{line:tetra-new-view}), whose implementation we omit.
For concreteness, we could use the FastSync synchronizer of Bravo et al.~\cite{bftlive-dc}, which does not rely on authenticated messages and requires only bounded space. 
The synchronizer ensures that for some view $\mathcal{V}$ entered after $\GST$, all correct processes enter each view $v \geq \mathcal{V}$ and overlap in $v$ for a given duration (TetraBFT requires $6\delta$).
The structure of each view in TetraBFT is as follows:
\begin{itemize}
\item Upon entering a new view, processes send a \mess{suggest} message to its leader with information to help it make a proposal (line~\ref{line:tetra-suggest}).
  They also broadcast \mess{proof} messages to help other processes check that the leader's proposal is correct (line~\ref{line:tetra-proof}).
\item The leader determines its proposal based on a quorum of \mess{suggest} messages satisfying a predicate \func{safe\_val\_leader}, and broadcasts the proposal in a \mess{propose} message (line~\ref{line:tetra-propose}).
\item
A process accepts the leader's proposal if it receives a quorum of \mess{proof} messages satisfying \func{safe\_val\_follower}. A process accepts only one proposal per view (cf. the \gvar{voted} flag).
\item
Processes go through four rounds of voting on the proposal, sending messages \mess{vote1} to \mess{vote4} (line~\ref{line:tetra-vote1}--\ref{line:send-vote4}).
Votes in each succeeding round are sent in response to a quorum of matching votes from the previous round.
When a process receives a quorum of matching \mess{vote4} messages, it decides (line~\ref{line:tetra-decide}).
\item
A process stores the last votes it issued in variables \gvar{V1} to \gvar{V4}.
For \mess{vote1} an \mess{vote2} it also keeps track of the last two votes for distinct values using
\gvar{prev\_V1} and \gvar{prev\_V2} (lines~\ref{line:different-vote1} and~\ref{line:different-vote2}).
These votes are then reported in \mess{suggest} and \mess{proof} messages (lines~\ref{line:tetra-suggest}--\ref{line:tetra-proof}).
\end{itemize}

\begin{figure}[p]
	\begin{minipage}{0.58\linewidth}
	\scalebox{0.85}{
	\begin{algorithm}[H]
		{\color{red} \tcp{The following handlers are executed only if $\gvar{started\_tetra} = \True$}}
		\SubAlgo{\Fun \func{init\_tetra}($x$)}{
			$\gvar{val}$ \lft $x$\\
			Start the view synchronizer\\
		}
		\SubAlgo{\Upon \func{new\_view}($v$) \label{line:tetra-new-view}}{
			$\gvar{view}$ \lft $v$ \\
			$\gvar{voted}$ \lft \False \\
			\Send \mess{suggest}($\gvar{view}$, $\gvar{V2}$,
                        $\gvar{prev\_V2}$, $\gvar{V3}$) \qquad \KwTo $\leader(\gvar{view})$ \label{line:tetra-suggest}\\
			\Send \mess{proof}($\gvar{view}$, $\gvar{V1}$, $\gvar{prev\_V1}$, $\gvar{V4}$) \KwTo \KwAll\label{line:tetra-proof}\\
		}
		{\color{red}
		\SubAlgo{\WhenReceived $f + 1$ $\mess{vote2}(\_, x_j)$, \qquad directly or within \mess{suggest} messages}{ \label{line:witness}
			\pre{$\gvar{lock} \neq \bot \wedge \forall j.\, \gvar{lock} \neq x_j$} \label{line:pre-unlock}
			$\gvar{lock}$ \lft $\bot$\\ \label{line:unlock2}
		}
		}
		\SubAlgo{\WhenReceived $n-f$ $\mess{suggest}(\gvar{view}, \_, \_,\_)$}{
                  \pre{$p_i = \leader(\gvar{view}) \wedge \exists x.\, \valid(x)\wedge {}$\label{line:propose-pre}}
                  \nonl\phantom{{\bf pre: }}$\func{safe\_val\_leader}(x) \wedge (x = \gvar{val} \vee {}$\\
                  \nonl\phantom{{\bf pre: }}$x$ is in one of the \mess{suggest} messages$)$\\
			\Send \mess{propose}($\gvar{view}$, $x$) \KwTo \KwAll \label{line:tetra-propose}\\
                      }

                      \smallskip
		\end{algorithm}
		}
		\end{minipage}
		\hspace{-30pt}
		\begin{minipage}{0.6\linewidth}
		\scalebox{0.85}{
		\begin{algorithm}[H]
		\SubAlgo{\WhenReceived $\mess{propose}(\gvar{view}, x)$
                  \KwFrom $\leader(\gvar{view})$ {\bf and} $n-f$
                  $\mess{PROOF}(\gvar{view}, \_, \_, \_)$}{
			\pre{$\func{safe\_val\_follower}(x) \wedge {}$} \label{line:pre-vote1}
                        \nonl\phantom{{\bf pre: }}$\valid(x) \wedge \neg \gvar{voted}$\\
			$\gvar{voted}$ \lft \True \\
			\lIf{the value in $\gvar{V1}$ $\not= x$\label{line:different-vote1}}{$\gvar{prev\_V1}$ \lft $\gvar{V1}$}
			$\gvar{V1}$ \lft \mess{vote1}($\gvar{view}$, $x$) \\
			\Send $\gvar{V1}$ \KwTo \KwAll \label{line:tetra-vote1}\\
		}
		\SubAlgo{\WhenReceived $n - f$ \mess{vote1}($\gvar{view}$,
                  $x$) \label{line:receive-vote1}}{\lIf{the value in $\gvar{V2}$
                    $\not= x$\label{line:different-vote2}}{$\gvar{prev\_V2}$ \lft $\gvar{V2}$}
			$\gvar{V2}$ \lft \mess{vote2}($\gvar{view}$, $x$) \\
			\Send $\gvar{V2}$ \KwTo \KwAll \\
		}
		\SubAlgo{\WhenReceived $n - f$ \mess{vote2}($\gvar{view}$, $x$)}{
			$\gvar{V3}$ \lft \mess{vote3}($\gvar{view}$, $x$) \\
			\Send $\gvar{V3}$ \KwTo \KwAll \\
		}
		\SubAlgo{\WhenReceived $n - f$ \mess{vote3}($\gvar{view}$, $x$)}{
			$\gvar{V4}$ \lft \mess{vote4}($\gvar{view}$, $x$) \\
			\Send $\gvar{V4}$ \KwTo \KwAll \label{line:send-vote4}\\
		}
		\SubAlgo{\WhenReceived $n - f$ \mess{vote4}($\gvar{view}$, $x$)}{
			{\bf decide $x$} \label{line:tetra-decide} \\
		}
	\end{algorithm}
	}
	\end{minipage}

        \caption{TetraBFT at process $p_i$. Initially, $\gvar{V1}, \gvar{V2}, \gvar{V3}, \gvar{V4}$, $\gvar{prev\_V1}$, $\gvar{prev\_V2}$ are $\bot$.}
\label{alg:tetra}
\end{figure}

\begin{figure}[p]
	\begin{minipage}{1.3\linewidth}
	\scalebox{0.85}{
	\begin{algorithm}[H]
		\SubAlgo{\Fun \func{safe\_suggest}($x$, $v$, $\mess{suggest}(\_, \mess{vote2}(v_2, x_2), \mess{vote2}(v_2', x_2'), \_)$)}{
			\Return $v = 1 \vee (v_2 \geq v \wedge x_2 = x) \vee (v_2' \geq v \wedge x_2' \neq x_2)$ \label{line:safe_suggest} \\
		}
		\SubAlgo{\Fun \func{safe\_proof}($x$, $v$, $\mess{proof}(\_, \mess{vote1}(v_1, x_1), \mess{vote1}(v_1', x_1'), \_)$)}{
			\Return $v = 1 \vee (v_1 \geq v \wedge x_1 = x) \vee (v_1' \geq v \wedge x_1' \neq x_1)$ \label{line:safe_proof} \\
		}
		\SubAlgo{\Fun \func{blocking\_safe\_suggests}($x$, $v$)}{ \label{line:block_suggests}
			\Return $\exists B.\, |B| \geq n \,{-}\, 2f \wedge \forall
	                p \,{\in}\, B.\,$\mbox{received a \mess{suggest} message $m$ from $p$
	                  s.t.} $\func{safe\_suggest}(x, v, m)$\\
		}
		\SubAlgo{\Fun \func{blocking\_safe\_proofs}($x$, $v$)}{
			\Return $\exists B.\, |B| \geq n \,{-}\, 2f \wedge \forall p \,{\in}\, B.\,$\mbox{received a \mess{proof} message $m$ from $p$
	                  s.t.} $\func{safe\_proof}(x, v, m)$ \\
		}
		\SubAlgo{\Fun \func{safe\_val\_leader}($x$)}{

			{\color{red}\lIf{$\gvar{lock} \neq \bot \wedge \gvar{lock} \neq x$}{\Return \False}\label{line:safe_leader_lock_check}}
			\lIf{$\gvar{view} = 1$}{\Return \True}
			Examine the information about \mess{vote3} received in this \gvar{view}'s \mess{suggest} messages:\\
			\lIf{$n - f$ processes have not sent any \mess{vote3} before $\gvar{view}$}{\Return \True} \label{line:no_vote3}
			\Return $\exists Q, v'.\, |Q| \geq n - f \wedge v' < \gvar{view}
	                  \wedge \func{blocking\_safe\_suggests}(x, v')$\label{tetra:safe-val-leader-3}
				\nonl \returnalso (no process in $Q$ has sent a \mess{vote3}
	                        after $v'$ or a \mess{vote3}$(v', x')$ for any $x' \neq x$) \\
		}
		\SubAlgo{\Fun \func{safe\_val\_follower}($x$)}{
			{\color{red}\lIf{$\gvar{lock} \neq \bot \wedge \gvar{lock} \neq x$}{\Return \False}\label{line:safe_follower_lock_check}}
			\lIf{$\gvar{view} = 1$}{\Return \True}
			Examine the information about \mess{vote4} received in this \gvar{view}'s \mess{proof} messages:\\
			\lIf{$n - f$ processes have not sent any \mess{vote4} before $\gvar{view}$\label{tetra:safe-val-follower-2}}{\Return \True}
			\Return $\exists Q, v'.\, |Q| \geq n - f \wedge v' <
                        \gvar{view}$ \label{tetra:safe-val-follower-3}
				\nonl \returnalso (no process in $Q$ has sent a \mess{vote4}
	                        after $v'$ or a \mess{vote4}$(v', x')$ for any $x' \neq x$)
	                  \nonl \returnalso $(\func{blocking\_safe\_proofs}(x, v') \vee 
	(\exists x_1, x_2, v_1, v_2.\, x_1 \neq x_2 \wedge v' \leq v_1 < v_2 < \gvar{view}$
			\nonl \returnalso $\func{blocking\_safe\_proofs}(x_1, v_1) \wedge \func{blocking\_safe\_proofs}(x_2, v_2)))$ \\
		}
	\end{algorithm}
	}
\end{minipage}

\caption{TetraBFT: auxiliary functions.}
\label{alg:tetra_aux}
\end{figure}

\section{Fast TetraBFT}

Fast TetraBFT runs a special protocol in view $0$ (\autoref{alg:view_0}), which is driven by an \emph{initial leader}.
When this leader is correct and the network is synchronous, the protocol decides within $3\delta$.
Hence, at the beginning of the protocol each process sets a timer for a conservative bound of $3\Delta$ (line~\ref{line:send-fast-propose}).
When the timer expires (line~\ref{line:timeout}), the process switches to TetraBFT, disabling the actions in \autoref{alg:view_0} and enabling those in \autoref{alg:tetra} via a flag \gvar{started\_tetra}.

The initial leader starts Fast TetraBFT by broadcasting its consensus proposal in a \mess{fast\_propose} message (line~\ref{line:send-fast-propose}).
Each process accepts only one such proposal (cf. the \gvar{voted} flag), which it broadcasts in a \mess{vote0} message (line~\ref{line:vote}).
A process receiving a quorum of \mess{vote0}s for the same value broadcasts it in a \mess{commit} message (line~\ref{line:vote-end}), and a process receiving a quorum of matching \mess{commit}s decides (line~\ref{line:early-decide}).
This ensures the good-case latency of $3\delta$.

A process sending a \mess{commit} message for a value $x$ agrees to $x$ being decided, and thus becomes \emph{locked} on this value.
This is recorded by storing $x$ in \gvar{val} and \gvar{lock} (line~\ref{line:lock}).
The former variable determines the proposal fed into TetraBFT (line~\ref{line:init-tetra}), and is initialized to the process's own proposal (line~\ref{line:val-init}).
The latter constrains the proposals that can be made by leaders and accepted by followers in TetraBFT: when \gvar{lock} is set to $x$, any proposals different from $x$ are rejected (lines~\ref{line:safe_leader_lock_check} and~\ref{line:safe_follower_lock_check}).
Since view $0$ may fail to reach a decision, a process also needs to be able to \emph{unlock} the value $x$ it recorded, to allow TetraBFT to decide.
Our main insight is that the process can do this upon receiving $f + 1$ \mess{vote2} messages for values different from $x$, either directly or within \mess{suggest} messages (line~\ref{line:witness}).

\subsection{Ensuring Bounded Memory}

Fast TetraBFT can be implemented in bounded memory as follows:
\begin{itemize}
\item For each message type and sender, a process stores the message of this type
  from this sender that has the highest view.
\item To support unlocking (line~\ref{line:witness}), a process also retains two
  latest \mess{vote2} messages from every sender that were cast for distinct
  values. This applies to \mess{vote2}s sent both directly and within
  \mess{suggest} messages.
\end{itemize}
These rules do not compromise liveness because the protocol does not rely on
messages from previous views for any purposes other than unlocking. The above
rule for \mess{vote2} messages ensures that a process never forgets having
received a \mess{vote2} for a value different from its lock, so the unlocking
condition in lines~\ref{line:witness}-\ref{line:pre-unlock} is never missed.

\begin{figure}[t]
\begin{minipage}{\linewidth}
	\begin{minipage}{0.5\linewidth}
	\scalebox{0.85}{
	\begin{algorithm}[H]
		\SubAlgo{\Fun \func{consensus}($x$)}{
			$\gvar{val}$ \lft $x$ \label{line:val-init} \\
			$\gvar{lock}$ \lft $\bot$ \\
			$\gvar{started\_tetra}$ \lft \gvar{voted} \lft \False \\
			Set a timer for $3\Delta$ \\
			\If{$p_i$ is the initial leader}{
				\Send \mess{fast\_propose}($\var{x}$) \KwTo \KwAll \label{line:send-fast-propose}\\
			}
		}
		\SubAlgo{{\bf when} the timer expires}{ \label{line:timeout}
			\pre{$\neg \gvar{started\_tetra}$}
			$\gvar{started\_tetra}$ \lft \True \\
			\func{init\_tetra}($\gvar{val}$) \label{line:init-tetra}\\
		}
	\end{algorithm}
	}
	\end{minipage}
	\hfill
	\begin{minipage}{0.6\linewidth}
	\scalebox{0.85}{
	\begin{algorithm}[H]
		\SubAlgo{\WhenReceived \mess{fast\_propose}($x$) \KwFrom \qquad the initial leader}{
			\pre{$\valid(x) \wedge \neg \gvar{voted} \wedge \neg \gvar{started\_tetra}$} \label{line:pre-vote} 
			$\gvar{voted}$ \lft \True \\
			\Send \mess{vote0}($x$) \KwTo \KwAll \\ \label{line:vote}
		}
		\SubAlgo{\WhenReceived $n - f$ \mess{vote0}($x$)\label{line:receive-vote}}{
			\pre{$\neg \gvar{started\_tetra}$} \label{line:pre-lock}
			$\gvar{lock}$ \lft $\gvar{val}$ \lft $x$ \label{line:lock}\\
			\Send \mess{commit}($x$) \KwTo \KwAll \label{line:vote-end}\\
		}
		\SubAlgo{\WhenReceived $n - f$ \mess{commit}($x$) \label{line:receive-final-vote}}{
			{\bf decide $x$} \label{line:early-decide}\\
                      }
	\end{algorithm}
	}
	\end{minipage}
\end{minipage}

\caption{Fast TetraBFT code for view $0$ at process $p_i$.}
\label{alg:view_0}
\end{figure}

\section{Correctness}

\textsf{Validity} holds trivially, so we now prove \textsf{Agreement}. First, we
show correct processes cannot lock different values or decide
different values in view $0$.

\begin{lemma}
  If two correct processes lock values $x$ and $x'$, respectively, then $x=x'$.
\label{lem:locks-agree}
\end{lemma}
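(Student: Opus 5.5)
The proof will be a standard quorum-intersection argument that uses only the view-$0$ code of \autoref{alg:view_0}. The only place a correct process assigns a non-$\bot$ value to $\gvar{lock}$ is line~\ref{line:lock}. The other write to $\gvar{lock}$ is the unlock at line~\ref{line:unlock2}, which sets it to $\bot$, and the initialization in $\func{consensus}$ also sets it to $\bot$. So ``$p$ locks $x$'' means that $p$ executed line~\ref{line:lock} with value $x$. By the handler at line~\ref{line:receive-vote}, $p$ must then have received $n-f$ \mess{vote0}($x$) messages from distinct senders. Likewise, a correct $p'$ that locks $x'$ received $n-f$ \mess{vote0}($x'$) messages from distinct senders.

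The key step is to intersect the two sender sets $Q$ and $Q'$. Each has size $n-f$, so $|Q\cap Q'|\ge 2(n-f)-n = n-2f \ge f+1$, using $n\ge 3f+1$. Hence $Q\cap Q'$ contains at least one correct process $q$. Because channels are authenticated, $q$ really did send both \mess{vote0}($x$) and \mess{vote0}($x'$).

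It remains to show that a correct process sends at most one \mess{vote0} message. Line~\ref{line:vote} is the only line that sends \mess{vote0}. It is guarded by the precondition $\neg\gvar{voted}$ at line~\ref{line:pre-vote}, and the handler sets $\gvar{voted}$ to \True{} before sending. The flag $\gvar{voted}$ is reset to \False{} in the TetraBFT handler $\func{new\_view}$, so this is the one subtle point. I will handle it in either of two ways. One option is to note that $\gvar{started\_tetra}$ is set before any TetraBFT handler can run, and once set it is never cleared, so line~\ref{line:pre-vote} never fires afterwards. The other option is to treat the view-$0$ $\gvar{voted}$ flag as distinct from TetraBFT's per-view flag. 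Either way, $q$ sends \mess{vote0} for a single value, so $x=x'$.

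The main obstacle is only this bookkeeping about the $\gvar{voted}$ flag being shared with TetraBFT. The intersection argument itself is routine.
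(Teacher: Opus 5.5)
Your proof is correct and follows the paper's argument: a lock requires $n-f$ $\mess{vote0}$ messages for the value, two such quorums share a correct process, and a correct process sends $\mess{vote0}$ at most once. Your first way of handling the reset of $\gvar{voted}$ in $\func{new\_view}$ spells out a detail the paper leaves implicit: TetraBFT handlers run only after $\gvar{started\_tetra}$ is set, and the $\neg\gvar{started\_tetra}$ conjunct at line~\ref{line:pre-vote} then permanently disables view-$0$ voting. That is the option to use, since the second option would change the code.
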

\begin{proof}
A correct process locks a
value only after receiving a quorum of $n-f$ $\mess{VOTE0}$ messages for it
(line~\ref{line:receive-vote}). Since any two quorums intersect in a correct
process, locking different values $x$ and $x'$ requires some correct process to
send both $\mess{VOTE0}(x)$ and $\mess{VOTE0}(x')$. This cannot happen since a
process only sends a $\mess{VOTE0}$ once (the \gvar{voted} flag at
line~\ref{line:pre-vote}).
\end{proof}

\begin{lemma}
Any two correct processes that decide in view $0$ must decide the same value.
\label{lem:view0-agree}
\end{lemma}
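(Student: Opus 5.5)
The plan is to reduce the statement to Lemma~\ref{lem:locks-agree} via quorum intersection on \mess{commit} messages. A correct process decides in view $0$ only at line~\ref{line:early-decide}, after receiving $n-f$ matching \mess{commit}$(x)$ messages. Suppose correct processes $p$ and $p'$ decide $x$ and $x'$ in view $0$. Their two quorums of \mess{commit} senders have size $n-f$ each, so with $n \ge 3f+1$ they intersect in at least $n-2f \ge f+1$ processes. Hence they share at least one correct process. Call this process $q_1$ if $p$'s quorum is the reference, and note that we can pick one correct process $q$ in the intersection of $p$'s quorum with itself. More usefully, $p$'s quorum contains a correct sender $q$ of \mess{commit}$(x)$, since $n-f > f$, and $p'$'s quorum contains a correct sender $q'$ of \mess{commit}$(x')$.

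Next, I would observe that a correct process sends \mess{commit}$(y)$ only at line~\ref{line:vote-end}. Immediately before that, at line~\ref{line:lock}, it sets $\gvar{lock}$ to $y$. So $q$ locks $x$ and $q'$ locks $x'$. Applying Lemma~\ref{lem:locks-agree} to $q$ and $q'$ gives $x = x'$.

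The only point requiring care is that "locks" in Lemma~\ref{lem:locks-agree} refers to the assignment at line~\ref{line:lock}, regardless of whether the lock is later cleared at line~\ref{line:unlock2}. That lemma's proof only uses the receipt of $n-f$ \mess{vote0} messages, so later unlocking is irrelevant here. Beyond this, I expect no real obstacle. The argument does not even need the intersection of the two \mess{commit} quorums, only that each quorum contains a correct member. This holds because $n-f \ge 2f+1 > f$.
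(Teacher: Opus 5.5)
Your proof is correct and matches the paper's: each decider's \mess{commit} quorum contains a correct sender, which locked the committed value at line~\ref{line:lock} before sending, so Lemma~\ref{lem:locks-agree} gives $x = x'$. The digression about intersecting the two \mess{commit} quorums (including the ``intersection of $p$'s quorum with itself'') adds nothing and can be dropped, as you yourself note at the end.
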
 
\begin{proof}
A correct process decides a value $x$ in view $0$ after receiving
$\mess{commit}(x)$ from a quorum of $n-f$ processes
(line~\ref{line:receive-final-vote}). Out of these processes, at least one must
be correct, and this process locks $x$ before sending its message.
The required then follows from Lemma~\ref{lem:locks-agree}.
\end{proof}

The safety of TetraBFT rules out contradictory decisions in views $>0$ (we
formalize this in Appendix~\ref{sec:appendix}). It thus remains to prove
agreement between the values decided in view $0$ (line~\ref{line:early-decide})
and those decided in the subsequent views (line~\ref{line:tetra-decide}). 
To this end, we show the following auxiliary result.
\begin{lemma}
  If $x$ is locked by $f + 1$ correct processes, then no correct process can
  unlock $x$.
  \label{lem:no-unlock}
\end{lemma}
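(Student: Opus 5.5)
Suppose, for contradiction, that some correct process $p$ unlocks $x$. By line~\ref{line:witness}, $p$ has received $f+1$ messages $\mess{vote2}(\_, x_j)$ with $x_j \neq x$ for all $j$, either directly or inside \mess{suggest} messages. These come from $f+1$ distinct senders, so at least one sender $q$ is correct. Thus some correct process $q$ sent a $\mess{vote2}(v, y)$ with $y \neq x$ in some TetraBFT view $v \geq 1$. A correct process reports in a \mess{suggest} message only votes it actually cast (via $\gvar{V2}$ and $\gvar{prev\_V2}$), so both delivery routes give the same conclusion. The plan is to show that no correct process can ever cast such a vote while $f+1$ correct processes hold $x$ locked, arguing by induction on the order of events.

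\textbf{Step 1 (a \mess{vote2} needs many \mess{vote1}s).} By line~\ref{line:receive-vote1}, $q$ sent $\mess{vote2}(v,y)$ only after receiving $n-f$ messages $\mess{vote1}(v,y)$. Among these, at least $n-2f \geq f+1$ come from correct processes. Let $L$ be the set of $f+1$ correct processes that lock $x$. Since $|L| + (n-2f) \geq n-f+1 > n-f$, and all of these processes are correct, the two sets intersect. Hence some $r \in L$ sent $\mathtt{VOTE1}(v,y)$ with $y \neq x$. (Note $n-f$ correct processes exist, so a pigeonhole over correct processes suffices.)

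\textbf{Step 2 (a locked process does not vote for other values).} Process $r$ sent \mess{vote1} only after passing $\func{safe\_val\_follower}(y)$, which returns false when $\gvar{lock} \neq \bot$ and $\gvar{lock} \neq y$ (line~\ref{line:safe_follower_lock_check}). So when $r$ voted, its lock was either $\bot$ or $y$. It cannot be $y$: by Lemma~\ref{lem:locks-agree} every correct lock equals $x$. It cannot be $\bot$ because of ordering. Locks are set only in view $0$, before $\gvar{started\_tetra}$ (line~\ref{line:pre-lock}), and $\mathtt{VOTE1}$ is sent only after TetraBFT has started. So $r$'s lock on $x$ was already set when it voted, and the only way it became $\bot$ is that $r$ itself unlocked $x$ earlier.

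\textbf{Step 3 (induction on the first unlock).} This is the main obstacle, because Step 2 is circular: $r$ may have unlocked earlier. To break the circularity, take the \emph{first} unlock of $x$ by any correct process, at real time $t$, and apply Steps 1--2 to it. The $\mathtt{VOTE2}(v,y)$ from a correct $q$ that triggered it was sent before $t$, and the $\mathtt{VOTE1}(v,y)$ from $r \in L$ was sent before that. At that moment $r$ still held the lock $x$, since no correct process had yet unlocked. This contradicts line~\ref{line:safe_follower_lock_check}.

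One subtlety remains: the hypothesis says these processes lock $x$, perhaps at different times. I would interpret this as all of $L$ having locked before TetraBFT's votes. That holds automatically, because every lock precedes that process's own $\gvar{started\_tetra}$, and every \mess{vote1} by $r$ follows it. So the argument needs only the per-process ordering for $r$, not a global one.
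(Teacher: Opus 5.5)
Your proof is correct and follows the paper's argument essentially step for step. You take the earliest unlock of $x$, trace the triggering \mess{vote2} from a correct process back to $n-f$ \mess{vote1}s, and intersect these with the $f+1$ lockers to find a correct locker that voted for $y \neq x$. You then use the lock check in \func{safe\_val\_follower}, together with the fact that locking precedes $\gvar{started\_tetra}$, to force an unlock before the earliest one.

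One small fix is needed: the pigeonhole in Step 1 as written only works when exactly $f$ processes are faulty, since otherwise more than $n-f$ correct processes exist. Instead, intersect $L$ with all $n-f$ \mess{vote1} senders among the $n$ processes, using $(f+1)+(n-f)=n+1>n$; the common member lies in $L$ and is therefore correct.
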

\begin{proof}
  Assume toward a contradiction that some correct process unlocks $x$. Let $t$
  be the earliest time when this happens, and let $p_i$ be the unlocking
  process.  Then, by the time $t$, $p_i$ has received $\mess{VOTE2}(\_, x_j)$
  from $f+1$ processes $p_j$ such that every $x_j\neq x$
  (line~\ref{line:witness}). Since at least one of these processes is correct,
  it must have sent its $\mess{VOTE2}$ following the receipt of $n-f$ matching
  $\mess{VOTE1}$s for some value $x'\neq x$ (line~\ref{line:receive-vote1}). We
  know that $f + 1$ correct processes lock $x$, so there exists a correct process
  $p$ which both locks $x$ at some time $t_l$ and broadcasts
  $\mess{VOTE1}(\_, x')$ at some time $t' < t$. Hence,
  $\func{safe\_val\_follower}(x', v')$ holds at $p$ at time $t'$. Also,
  $x$ is the only value that $p$ can ever lock, so
  line~\ref{line:safe_follower_lock_check} implies that $p$ has
  $\gvar{lock}=\bot$ at time $t'$.
  Furthermore, $\gvar{started\_tetra}=\False$ at $p$ at $t_l$
  (line~\ref{line:pre-lock}) and must be $\True$ for $p$ to send $\mess{VOTE1}$
  at $t'$.
  Then $t_l < t'$, so $p$ must unlock $x$ at some point between $t_l$
  and $t'$. Since $t' < t$, this contradicts our assumption that no correct
  process does so before $t$.
\end{proof}

\begin{lemma}
  If two correct processes decide $x$ in view $0$ and $x'$ in a view $>0$, then
  $x=x'$.
\label{lem:agree-tetra-wrapper}
\end{lemma}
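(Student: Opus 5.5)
Suppose a correct process decides $x$ in view $0$ and a correct process decides $x'$ in some view $v>0$. The plan is to show that from the moment view $0$ decides, every correct process that sends $\mess{vote1}$ in TetraBFT does so for $x$. Then every quorum of $\mess{vote1}$s in any view $>0$ is for $x$, and so is every decision.

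\emph{Step 1: $f+1$ correct processes lock $x$ forever.} The view-$0$ decision on $x$ comes from $n-f$ $\mess{commit}(x)$ messages, and at least $n-2f \ge f+1$ of their senders are correct. Each of these set $\gvar{lock}=x$ at line~\ref{line:lock} before sending $\mess{commit}$. By Lemma~\ref{lem:no-unlock}, no correct process ever unlocks $x$. These $f+1$ processes therefore keep $\gvar{lock}=x$ forever, since by Lemma~\ref{lem:locks-agree} they can never lock anything else. Moreover, any correct process that locks at all locks $x$ and never releases it.

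\emph{Step 2: no quorum of $\mess{vote1}$ for any $x''\neq x$.} Suppose $n-f$ processes send $\mess{vote1}(w,x'')$ with $x''\ne x$ for some view $w\ge 1$. This quorum intersects the $f+1$ permanent lockers, because $(n-f)+(f+1)>n$. So some correct process with $\gvar{lock}=x$ would have to pass $\func{safe\_val\_follower}(x'')$. It cannot: the check at line~\ref{line:safe_follower_lock_check} rejects $x''$.

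One subtlety is timing. A locker might send $\mess{vote1}$ before it locks. It cannot, because locking at line~\ref{line:pre-lock} requires $\neg\gvar{started\_tetra}$, while sending $\mess{vote1}$ requires $\gvar{started\_tetra}$, and the flag never reverts. Hence every $\mess{vote1}$ a locker sends is sent while it holds $\gvar{lock}=x$. So no correct process ever sends $\mess{vote2}(\_,x'')$ with $x''\neq x$, since that needs $n-f$ matching $\mess{vote1}(x'')$.

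\emph{Step 3: conclude.} Deciding $x'$ in view $v>0$ requires $n-f$ $\mess{vote4}(v,x')$. At least one of these comes from a correct process. Following the chain $\mess{vote4}\leftarrow\mess{vote3}\leftarrow\mess{vote2}\leftarrow\mess{vote1}$, each correct sender at one level had a quorum of matching messages at the previous level, and that quorum contains a correct process. At the bottom there is a correct process that received $n-f$ $\mess{vote1}(v,x')$. By Step~2 this forces $x'=x$.

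The main obstacle is Step~1, specifically making sure Lemma~\ref{lem:no-unlock} applies even though the decision in view $0$ may happen after some correct processes have already started TetraBFT. The lemma's hypothesis only needs $f+1$ correct processes to lock $x$ at some point, without any timing constraint, so it covers this case. I would check that this reading matches the lemma as proved, and that the $f+1$ lockers need not all lock before the decision.
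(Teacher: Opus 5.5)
Your proof is correct and follows the paper's argument: the at least $n-2f \ge f+1$ correct senders of \mess{commit}$(x)$ have locked $x$, Lemma~\ref{lem:no-unlock} keeps them locked forever, and the lock check in \func{safe\_val\_follower} then rules out a quorum of \mess{vote1} messages for any other value in views $>0$. Your explicit treatment of the \gvar{started\_tetra} ordering (lock before any \mess{vote1}) and of the \mess{vote4}-to-\mess{vote1} chain spells out steps the paper leaves implicit, and your reading that Lemma~\ref{lem:no-unlock} carries no timing constraint on when the $f+1$ processes lock is right.
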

\begin{proof}
  A correct process decides $x$ in view $0$ after receiving a quorum of $n-f$
  $\mess{COMMIT}(x)$ messages (lines~\ref{line:receive-final-vote}).  Out of
  these, at least $n-2f$ must be sent by correct processes, which have thus
  locked $x$ (line~\ref{line:lock}). By Lemma~\ref{lem:no-unlock}, these
  processes will never unlock $x$. Then \func{safe\_val\_follower} ensures that
  they will never send a $\mess{VOTE1}$ message for another value $x'\neq x$ in
  any view $>0$ (lines~\ref{line:safe_follower_lock_check}
  and~\ref{line:pre-vote1}). This makes it impossible to assemble a quorum of
  $n-f$ $\mess{VOTE1}$ messages for $x'$ needed to decide this value in a view
  $>0$.
\end{proof}

We have thus established \textsf{Agreement} for Fast TetraBFT. We next prove
\textsf{Liveness}.

\begin{theorem} \label{thm:liveness}
	Every correct process eventually decides.
\end{theorem}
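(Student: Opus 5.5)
We reduce liveness to the liveness of (slightly modified) TetraBFT by showing that, after $\GST$ and once all correct processes have started TetraBFT, the lock mechanism can never permanently block progress in a good view. Every correct process's timer expires within $3\Delta$ of its local start, so eventually every correct process sets $\gvar{started\_tetra}=\True$. From then on it executes only the code of \autoref{alg:tetra}, and its $\gvar{lock}$ can only change from some $x$ to $\bot$ (line~\ref{line:unlock2}), never be set again (line~\ref{line:pre-lock}). By the synchronizer guarantee, there is a view $\mathcal{V}$ entered after $\GST$ from which all correct processes enter each view and overlap for $6\delta$.

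We then split into two cases according to the locks held by correct processes at this point. By Lemma~\ref{lem:locks-agree}, all correct locks are on a single value $x^*$ (or $\bot$).

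\emph{Case 1:} at least $f+1$ correct processes lock $x^*$. By Lemma~\ref{lem:no-unlock} these locks persist forever. Hence no correct process ever sends $\mess{vote1}$ for a value other than $x^*$, so no correct process sends $\mess{vote2}$ for a value other than $x^*$. We must show that in a view $v\ge \mathcal{V}$ with a correct leader, the TetraBFT leader rule can and will choose $x^*$. Since at most $f$ correct processes are unlocked, any quorum of $\mess{suggest}$ messages contains a locked correct process. Moreover, a locked process set $\gvar{val}=x^*$ before starting TetraBFT, so a correct leader with $\gvar{val}=x^*$ may propose it, and otherwise $x^*$ appears in suggestions whenever prior $\mess{vote2}$s exist. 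The key claim is that, since all correct processes' $\mess{vote1}$/$\mess{vote2}$ history concerns only $x^*$, the predicates $\func{safe\_val\_leader}(x^*)$ and $\func{safe\_val\_follower}(x^*)$ hold exactly as they would in unmodified TetraBFT if every correct process had proposed $x^*$. We then invoke TetraBFT's liveness argument in such a run: the lock checks never reject $x^*$, so the run is indistinguishable from an ordinary TetraBFT run.

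\emph{Case 2:} at most $f$ correct processes are locked. The unlocked correct processes number at least $n-2f\ge f+1$. We argue that either TetraBFT proceeds as if unmodified, or locked processes get unlocked. If a correct leader's proposal $y\ne x^*$ is rejected by locked followers, the view may fail. We would try to show that once any quorum of correct processes sends $\mess{vote2}$ for some $y\neq x^*$, every correct locked process receives $f+1$ such $\mess{vote2}$s (directly or via $\mess{suggest}$) and unlocks. After that point all correct processes are unlocked and plain TetraBFT liveness applies.

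The main obstacle is Case 2 when the unlocked correct processes alone cannot form a quorum of $\mess{vote1}$ for $y$, so no $\mess{vote2}$ for $y$ is ever produced. To handle it, we would show that the leader's choice is already forced to $x^*$. The leader checks its own lock, and $\func{safe\_val\_leader}$ together with the retained $\mess{vote2}$ information makes $x^*$ safe: a value that reached $n-f$ $\mess{vote0}$s could not have been blocked by higher $\mess{vote3}$s from correct processes without those processes first unlocking. Alternatively, the leader of some subsequent view is a locked correct process, which proposes $x^*$, and this proposal must pass the unlocked followers' TetraBFT checks. Verifying that $x^*$ satisfies TetraBFT's safety predicates at that point, without relying on the paper's omitted appendix formalization, is the step I expect to require the most care.
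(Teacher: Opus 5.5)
Your proposal has a genuine gap. Case~2 is never proved, as you acknowledge, and splitting on the number of locked processes is not the dichotomy that makes the proof work. The missing idea is to split on whether \emph{some} correct process stays locked forever. If none does, all correct processes eventually have $\gvar{lock}=\bot$, and Lemma~\ref{prop:live_tetra} gives liveness.

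If some correct $p$ stays locked on $l$ forever, take a view $v\ge\mathcal{V}$ led by $p$ and use $p$'s permanent lock as the lever. Let $v'<v$ be the highest view in which some process of a correct quorum $Q$ sent $\mess{vote3}$ (for the leader's check) or $\mess{vote4}$ (for a follower's check), and let $x$ be its value.
\begin{itemize}
\item If $x\neq l$, at least $n-2f\ge f+1$ correct processes sent $\mess{vote2}(v',x)$. Each of them carries a $\mess{vote2}$ for a value $\neq l$ in $\gvar{V2}$ or $\gvar{prev\_V2}$ of its $\mess{suggest}$ to $p$. Then $p$ unlocks at line~\ref{line:witness}, a contradiction.
\item So $x=l$. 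Those $n-2f$ correct processes then satisfy $\func{safe\_suggest}(l,v',\cdot)$ and $\func{safe\_proof}(l,v',\cdot)$, by the second disjunct, or by the third if they later voted for another value.
\end{itemize}
Hence $p$ proposes $l=\gvar{val}$, and every correct follower accepts it; its lock check passes by Lemma~\ref{lem:locks-agree}. This needs only one permanently locked process, so neither the $f+1$ threshold nor Lemma~\ref{lem:no-unlock} is required. Your heuristic that blocking $\mess{vote3}$s ``could not exist without those processes first unlocking'' cannot work in Case~2, since most correct processes were never locked. What rules such votes out is that they would force the \emph{leader} to unlock.

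Your Case~1 also contains errors.
\begin{itemize}
\item The claim that no correct process ever sends $\mess{vote1}$ for a value other than $x^*$ is false. An unlocked correct process accepts any valid proposal in view~1, for example. What is true is that at most $n-f-1$ processes can send $\mess{vote1}(v,y)$ for $y\neq x^*$. So no correct process sends $\mess{vote2}$--$\mess{vote4}$ for $y$, and the follower check for $x^*$ must use the third disjunct of $\func{safe\_proof}$ for processes whose latest $\mess{vote1}$ is for another value.
\item An unlocked correct leader need not propose $x^*$: its $\gvar{val}$ may differ, and the choice at line~\ref{line:propose-pre} is nondeterministic. Its proposal is then rejected by the $f+1$ locked followers and the view fails, so you must choose a view led by a locked process.
\item For the same reason, the run is not ``indistinguishable from an ordinary TetraBFT run.'' The simulation of Lemma~\ref{lem:equiv} yields a partially synchronous TetraBFT execution only when locks eventually disappear, which is exactly what fails in Case~1. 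You therefore cannot simply invoke TetraBFT's liveness there.
\end{itemize}
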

\begin{proof}
Each correct process eventually times out and starts executing TetraBFT (line~\ref{line:init-tetra}).
If all correct processes eventually have $\gvar{lock} = \bot$, then they end up executing unmodified TetraBFT, which we know is live (we formalize this in Appendix~\ref{sec:appendix}).
Thus, we only have to consider the case when a non-empty set of correct processes $L$ get locked on a value and never unlock.
Then they must all be locked on the same value $l$.
By the properties of the view synchronizer (\S\ref{sec:overview}), there exists a view $v \ge \mathcal{V}$ such that: the leader of $v$ is a correct process $p \in L$; and all correct processes enter $v$ after $\GST$ and overlap in $v$ for at least $6\delta$.
We now show that: \emph{(i)} $p$ must propose $l$; and \emph{(ii)} all correct processes accept this proposal.
Then the interval of $6\delta$ during which the correct processes overlap in $v$ is enough for them to exchange the messages leading to a decision, as required.

To prove \emph{(i)}, note that, by \autoref{line:safe_leader_lock_check}, $p$ cannot propose any value other than $l$.
We now prove that $p$ will send $\mess{PROPOSE}(v, l)$.
Assume this is not the case.
Then by \autoref{line:propose-pre}, $\func{safe\_val\_leader}(l)$ returns \False.
Process $p$ will eventually receive \mess{suggest} messages for view $v$ from a quorum $Q$ of $n - f$ correct processes.
By \autoref{line:no_vote3}, one of the processes in $Q$ must report in its \mess{suggest} message that it has previously sent a \mess{vote3}: otherwise $\func{safe\_val\_leader}(l)$ would return \True.
Let $v' < v$ be the highest view where a process from $Q$ sent \mess{vote3}, and $x$ be the value in this \mess{vote3}.
We now consider two cases:
\begin{itemize}
\item $x \neq l$.
  Correct processes include the two latest \mess{vote2}s for distinct values into their \mess{suggest} messages (\autoref{line:tetra-suggest}).
  Since a correct process sent $\mess{vote3}(v',x)$, $n - 2f$ correct processes sent $\mess{vote2}(v', x)$.
  Then the \mess{suggest} messages these processes send to $p$ in view $v$ must include $\mess{vote2}$ for a value different from $l$.
  This causes $p$ to unlock (\autoref{line:unlock2}), contradicting the fact that $p \in L$.

\item $x = l$.
Since a correct process sent $\mess{vote3}(v',l)$, $n - 2f$ correct processes sent $\mess{vote2}(v', l)$.
Consider one of these processes, and let $m$ be the \mess{suggest} message it sends in view $v$.
If between views $v'$ and $v$, it did not send \mess{vote2} for a value $x \neq l$,
then $\func{safe\_suggest}(l, v', m)$ holds by the second disjunct in \autoref{line:safe_suggest}.
Otherwise, the predicate holds by the third disjunct.
Thus, $\func{safe\_suggest}(l, v', m)$ holds for $n - 2f$ \mess{suggest} messages $m$ sent by correct processes in $v$.
Once $p$ receives these, $\func{blocking\_safe\_suggest}(l, v')$ holds.
This validates the condition at \autoref{tetra:safe-val-leader-3}, so that $\func{safe\_val\_leader}(l)$ returns \True: a contradiction.
\end{itemize}

 The proof of \emph{(ii)} is analogous.
  Assume toward a contradiction that some correct process $q$ does not accept $p$'s proposal.
  Then by \autoref{line:pre-vote1}, $\func{safe\_val\_follower}(l)$ returns \False.
  By Lemma~\ref{lem:locks-agree}, this cannot be due to \autoref{line:safe_follower_lock_check}.
Process $q$ will eventually receive \mess{proof} messages from a quorum $Q$ of correct processes.
By \autoref{tetra:safe-val-follower-2}, one of the processes in $Q$ must report in its \mess{proof} message that it had previously sent a \mess{vote4}: otherwise $\func{safe\_val\_follower}(l)$ would return \True.
Let $v' < v$ be the highest view where a process from $Q$ sent \mess{vote4}, and $x$ be the value in this \mess{vote4}.
We now consider two cases:
\begin{itemize}
\item $x \neq l$.
  Correct processes include the two latest \mess{vote2}s for distinct values into their \mess{suggest} messages (\autoref{line:tetra-suggest}).
  Since a correct process sent $\mess{vote4}(v',x)$, $n - 2f$ correct processes sent $\mess{vote2}(v', x)$.
  Then the \mess{suggest} messages these processes send to $p$ in view $v$ must include $\mess{vote2}$ for a value different from $l$.
  This causes $p$ to unlock (\autoref{line:unlock2}), contradicting the fact that $p \in L$.
\item $x = l$.
  Since a correct process sent $\mess{vote4}(v',l)$, $n - 2f$ correct processes sent $\mess{vote1}(v', l)$.
Consider one of these processes, and let $m$ be the \mess{proof} message it sends in view $v$.
If between views $v'$ and $v$, it did not send \mess{vote1} for a value $x \neq l$, then $\func{safe\_proof}(l, v', m)$ holds by the second disjunct in line~\ref{line:safe_proof}.
Otherwise, the predicate holds by the third disjunct.
Thus, $\func{safe\_proof}(l, v', m)$ holds for $n-2f$ \mess{proof} messages $m$ sent by correct processes in $v$.
Once $q$ receives these, $\func{blocking\_safe\_proofs}(l, v')$ holds at this process.
This validates the condition at \autoref{tetra:safe-val-follower-3}, so that $\func{safe\_val\_leader}(l)$ returns \True: a contradiction.
\end{itemize}
\end{proof}

\bibliographystyle{plainurl}
\bibliography{biblio}

\appendix

\section{Appendix}
\label{sec:appendix}

\begin{lemma} \label{lem:equiv}
  For any execution of Fast TetraBFT under partial synchrony, there is an execution of the original TetraBFT  under asynchrony with the exact same set of decisions in views $>0$. Furthermore, if in the execution of Fast TetraBFT all processes have $\gvar{lock} = \bot$ from some point onward, then the execution of TetraBFT is also valid under partial synchrony.
\end{lemma}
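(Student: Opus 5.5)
We prove the lemma by a simulation argument. Given an execution $E$ of Fast TetraBFT, we build an execution $E'$ of the original TetraBFT. Each process is correct in $E'$ iff it is correct in $E$. Every correct process $p_i$ in $E'$ starts TetraBFT with initial value $\gvar{val}$, which is the value $p_i$ passes to $\func{init\_tetra}$ at line~\ref{line:init-tetra} in $E$. We erase all view-$0$ messages (\mess{fast\_propose}, \mess{vote0}, \mess{commit}). We keep every TetraBFT message with the same send and receive events, and we keep the same synchronizer upcalls. Before $\func{init\_tetra}$, $p_i$ is inactive in $E'$ and all of its incoming TetraBFT messages are held back. This is allowed because $E'$ only needs to be an asynchronous execution, where arbitrary delays are permitted. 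The proposal passed to $\func{init\_tetra}$ is valid: it is either the process's own valid input or a value $x$ for which a quorum of \mess{vote0}$(x)$ was received. At least one correct process sent one of these, and it checked $\valid(x)$ at line~\ref{line:pre-vote}. So the initial values in $E'$ are legitimate.

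The central step is to show, by induction over the events of $E$, that each correct process $p_i$ takes exactly the same TetraBFT steps in $E$ and in $E'$. The only difference between the two codes is the red code: the lock check at lines~\ref{line:safe_leader_lock_check} and~\ref{line:safe_follower_lock_check}, and the unlocking handler at line~\ref{line:witness}. The unlocking handler only writes $\gvar{lock}$, which TetraBFT never reads, so it does not affect TetraBFT state. The lock check can only make $\func{safe\_val\_leader}$ or $\func{safe\_val\_follower}$ return \False{} where TetraBFT would return \True. Hence every step that $p_i$ takes in $E$ is enabled in $E'$. The remaining danger is a step that is enabled in $E'$ but suppressed in $E$: a leader that would propose $y\neq\gvar{lock}$, or a follower that would vote for such a $y$.

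We expect this to be the main obstacle. The fix is to note that TetraBFT's \mess{propose} and \mess{vote1} handlers are guarded by preconditions, and a precondition that holds does not oblige the process to act. Concretely, the leader's choice of $x$ at line~\ref{line:propose-pre} is nondeterministic. A follower's failure to vote can be modelled in $E'$ by delaying the relevant \mess{propose} or \mess{proof} messages until after the view ends, or by the leader's (asynchronous) message being delivered later than the view change. Under asynchrony any such postponement is admissible, so $E'$ can reproduce exactly the choices made in $E$. Since the states and messages of correct processes coincide, the \mess{vote4} quorums coincide as well. Hence the decisions in views $>0$ are identical, which proves the first claim.

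For the second claim, assume that from some time $t_0$ on every correct process has $\gvar{lock}=\bot$. After $\max(t_0,\GST)$ the red guards are vacuous, so in $E$ the correct processes run unmodified TetraBFT and deliver all messages within $\delta$. Our job is to choose the delays in $E'$ so that they respect partial synchrony. We set the stabilization time of $E'$ to $\GST' = \max(\GST, t_0, T)$, where $T$ is the time by which every correct process has called $\func{init\_tetra}$ (at most $3\Delta$ after its start). After $\GST'$, no message is held back and no suppression happens, so delivery times in $E'$ equal those in $E$ and are bounded by $\delta$. Before $\GST'$, arbitrary delays are allowed. Thus $E'$ is a valid partially synchronous execution, as required.
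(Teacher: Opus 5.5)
Your construction is the same simulation the paper uses: erase view $0$, keep all TetraBFT traffic, postpone or drop the \mess{propose} messages a lock blocks, and put the new stabilization time after the last lock disappears. Your check that the inputs $\gvar{val}$ handed to TetraBFT are valid is a useful addition.

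The gap is how you handle a leader whose proposal is suppressed by its lock. You argue that ``a precondition that holds does not oblige the process to act.'' That is not how the pseudocode is read. A correct process runs a handler once its trigger and precondition hold, and TetraBFT's liveness relies on exactly this, as does step \emph{(i)} of the proof of Theorem~\ref{thm:liveness}. Under your reading, $E'$ contains a correct leader that never fires an enabled handler, so $E'$ is not a legal TetraBFT execution. The second claim exists only so that TetraBFT's liveness can be applied to $E'$, so it does not go through. Nondeterminism in the choice of $x$ covers only the case where the locked leader proposes its lock value, not the case where it proposes nothing.

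The missing step is the paper's third transformation. In $E'$, the leader actually sends $\mess{propose}(v',x)$ at the moment the unmodified precondition holds, and the network loses that message. This leaves the leader's state unchanged, since the propose handler writes no variables. It is admissible because the leader is locked at that moment, so the message is sent before $t_0 \le \GST'$, where losses are allowed.

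A smaller point on the follower side: delaying the blocked \mess{propose} ``until after the view ends'' breaks your step-by-step correspondence in one case. Suppose that, in $E$, the follower unlocks later in the same view at a time $t'$ and then votes, because its \mess{vote1} handler becomes enabled once $\gvar{lock}=\bot$. Then delivery must be postponed exactly to $t'$, as the paper does. Only when no such unlock occurs within the view may the message be dropped or pushed past the view.
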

\begin{proof}
  Consider any execution of Fast TetraBFT under partial synchrony. We perform the following transformations.
  \begin{itemize}
    \item Remove every action relating to view $0$.
    \item Consider any message $m = \mess{propose}(v, x)$ received by a correct process $p$ with $\gvar{lock} \neq \bot$ and $\gvar{lock} \neq x$, i.e., \autoref{line:safe_follower_lock_check} blocks $p$ from accepting the proposal.
    If $m$ arrives when $p$ is in view $v$ and will unlock during $v$ at a time $t'$, then delay the delivery of $m$ until $t'$.
    If not, then drop $m$.
  \item Consider any time $t''$ when a leader $q$ of view $v'$ could have sent a proposal but does not do so because of its \gvar{lock}.
    That is, no value is considered safe by \func{safe\_val\_leader}, but there is a value $x$ that would be safe if we removed \autoref{line:safe_leader_lock_check}.
    In this case, introduce a $\mess{propose}(v', x)$ message sent by $q$ at $t''$ that is immediately dropped before being received by anyone.
  \end{itemize}
  These transformations yield a valid asynchronous execution of TetraBFT where some correct processes may have different inputs (due to \autoref{line:lock}), but still make the same decisions.
  Assume now that in the execution of Fast TetraBFT all processes have $\gvar{lock} = \bot$ from some time $t$.
  The above transformations do not delay or drop messages sent after $t$.
  Hence, the execution of TetraBFT we constructed is also valid under partial synchrony.
\end{proof}

At the beginning of the proof of {\sf Agreement}, we use the fact that there are no contradictory decisions in views $>0$. This is shown as follows:
\begin{lemma}
    Assume that, in Fast TetraBFT, correct processes $p$ and $q$ decide values $x$ and $y$ in views $> 0$. Then $x = y$.
\end{lemma}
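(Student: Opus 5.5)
The plan is to reduce the statement to the known safety of the original TetraBFT via Lemma~\ref{lem:equiv}. Fix any execution of Fast TetraBFT under partial synchrony in which correct processes $p$ and $q$ decide $x$ and $y$ in views $>0$, at line~\ref{line:tetra-decide}. By Lemma~\ref{lem:equiv}, there is an execution of the original TetraBFT under asynchrony with exactly the same set of decisions in views $>0$. In particular, $p$ decides $x$ and $q$ decides $y$ in that execution.

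TetraBFT is safe under asynchrony: its agreement proof does not rely on timing. It relies only on quorum intersection and on the \func{safe\_val\_leader}/\func{safe\_val\_follower} predicates. It also holds for arbitrary (valid) inputs of correct processes. This matters because the transformed execution may give some correct processes inputs taken from \gvar{val} as overwritten at line~\ref{line:lock}, rather than their original proposals. Hence agreement of TetraBFT~\cite{tetrabft} yields $x=y$.

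The step I expect to need the most care is checking that Lemma~\ref{lem:equiv} really delivers a \emph{legal} TetraBFT execution, so that the cited safety theorem applies. The lock checks at lines~\ref{line:safe_leader_lock_check} and~\ref{line:safe_follower_lock_check} only \emph{restrict} which proposals are sent and accepted. Every action a correct process takes in a view $>0$ of Fast TetraBFT is therefore also permitted by unmodified TetraBFT, given the same received messages. Each suppressed action is simulated by the message delays and drops of the transformation, which asynchrony allows.

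Likewise, the unlocking handler at line~\ref{line:witness} changes only \gvar{lock}, which TetraBFT does not read. The reported variables $\gvar{V1},\dots,\gvar{V4},\gvar{prev\_V1},\gvar{prev\_V2}$ evolve identically in both executions, because view~$0$ never touches them. So the \mess{suggest} and \mess{proof} messages coincide, and the safety argument of TetraBFT transfers verbatim.

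If one prefers to avoid the simulation, a fallback is to replay TetraBFT's agreement proof directly. Its invariants concern only votes in views $\ge 1$ and the predicates \func{safe\_val\_leader} and \func{safe\_val\_follower}. Adding an extra conjunct to those predicates only strengthens them, so every invariant of the form ``a correct process votes for $z$ in view $v$ only if $z$ is safe'' remains true.
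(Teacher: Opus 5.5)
Your proposal is correct and takes the paper's route: apply Lemma~\ref{lem:equiv} to obtain a TetraBFT execution under asynchrony, possibly with altered inputs, that has the same decisions in views $>0$, and conclude $x=y$ from TetraBFT's safety. Your extra remarks on why the transformed execution is legal and why the altered inputs are harmless only spell out what the paper delegates to the proof of Lemma~\ref{lem:equiv}.
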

\begin{proof}
  Consider any execution of Fast TetraBFT where $p$ and $q$ decide values $x$ and $y$ in views $> 0$. By \autoref{lem:equiv}, we know there is an execution of TetraBFT with the same decisions. Since TetraBFT is safe in this execution, we must have $x = y$.
\end{proof}

At the beginning of the proof of \autoref{thm:liveness}, we use the fact that, if all correct processes eventually unlock, then they will all decide. This is shown as follows:
\begin{lemma} \label{prop:live_tetra}
  Assume that, in Fast TetraBFT, all correct processes have $\gvar{lock} = \bot$ from some point
  onwards. Then all of them eventually decide.
\end{lemma}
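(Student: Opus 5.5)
The plan is to reduce the statement to the liveness of the original TetraBFT under partial synchrony, using \autoref{lem:equiv}. Fix an execution of Fast TetraBFT in which every correct process has $\gvar{lock} = \bot$ from some time $t$ onwards. Every correct process eventually times out (line~\ref{line:timeout}) and calls $\func{init\_tetra}(\gvar{val})$. So each one starts TetraBFT with some input, namely its own proposal or the value it locked in view $0$. In either case the input is valid, because locked values come from $\mess{vote0}$ messages, which correct processes send only for valid values.

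First, I apply the transformation of \autoref{lem:equiv}. This yields an execution of the original TetraBFT with the same decisions in views $>0$. By the second part of that lemma, the execution is valid under partial synchrony, since no message sent after $t$ is delayed or dropped. Some correct processes may have inputs that differ from their original proposals. However, TetraBFT's liveness argument does not depend on which valid inputs correct processes start with, so this is harmless.

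Second, I invoke the liveness of TetraBFT~\cite{tetrabft}, which relies on the view synchronizer. The synchronizer guarantees a view $\mathcal{V}$, entered after $\GST$, from which all correct processes enter each view and overlap in it for $6\delta$. Liveness then gives that every correct process decides in the TetraBFT execution. Since the decisions in views $>0$ coincide, every correct process decides in the Fast TetraBFT execution as well.

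The main obstacle is justifying that the constructed execution really is a partially synchronous TetraBFT execution that meets the synchronizer's hypotheses. There are two problems. The transformation drops messages and fakes dropped proposals before $t$, and the processes start TetraBFT at different times (after $3\Delta$ local timers). To handle this, I would define an effective stabilization time $\GST' = \max(\GST, t, T)$, where $T$ is the time by which all correct processes have started TetraBFT. After $\GST'$, the Fast TetraBFT execution and the transformed execution behave identically. In particular, lines~\ref{line:safe_leader_lock_check} and~\ref{line:safe_follower_lock_check} are vacuous once $\gvar{lock} = \bot$. All messages between correct processes are then delivered within $\delta$, so the execution is partially synchronous with stabilization time $\GST'$. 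One point needs care: the unlocking handler (line~\ref{line:witness}) only ever sets $\gvar{lock}$ to $\bot$ and changes no other TetraBFT state, so it does not perturb the TetraBFT execution.
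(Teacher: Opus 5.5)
Your proof is correct and follows the same route as the paper. Both apply Lemma~\ref{lem:equiv} to obtain a partially synchronous TetraBFT execution with the same decisions in views $>0$, then invoke TetraBFT's liveness; your additional details (validity of a locked input, the effective stabilization time $\max(\GST, t, T)$, and the fact that the unlocking handler touches no other TetraBFT state) only spell out what the paper leaves implicit in the second part of that lemma.
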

\begin{proof}
  Consider any execution of Fast TetraBFT where the hypothesis holds. By \autoref{lem:equiv}, there is an execution of TetraBFT under partial synchrony with the same set of decisions in views $>0$. Since TetraBFT is live, every correct process decides.
\end{proof}

\end{document}